\newcommand{\vmark}{\text{\ding{51}}}
\newcommand{\xmark}{\text{\ding{55}}}
\newcommand{\bracket}[3]{\langle#1|#2|#3\rangle}
\newcommand{\id}{\openone}
\newtheorem{result}{Result}
\newtheorem{conjecture}{Conjecture}
\newtheorem{definition}{Definition}
\newtheorem{lemma}{Lemma}
\newtheorem{corollary}{Corollary}
\begin{document}


\title{Correspondence between entangled states and entangled bases
    under local transformations}


\author{Florian Pimpel}\thanks{These authors contributed equally.}
\affiliation{Atominstitut,  Technische  Universit{\"a}t  Wien, Stadionallee 2, 1020  Vienna,  Austria}
\author{Martin J.~Renner}\thanks{These authors contributed equally.}
\affiliation{University of Vienna, Faculty of Physics, Vienna Center for Quantum Science and Technology (VCQ), Boltzmanngasse 5, 1090 Vienna, Austria}
\affiliation{Institute for Quantum Optics and Quantum Information - IQOQI Vienna, Austrian Academy of Sciences, Boltzmanngasse 3, 1090 Vienna, Austria}
\author{Armin Tavakoli}
\affiliation{Physics Department, Lund University, Box 118, 22100 Lund, Sweden}

\begin{abstract}
We investigate whether pure entangled states can  be associated to a measurement basis in which all vectors are local unitary transformations of the original state. We prove that for bipartite states with a local dimension that is either $2, 4$ or $8$, every state corresponds to a basis. Via numerics we strongly evidence the same conclusion also for two qutrits and three qubits. However, for some states of four qubits we are unable to find a  basis, leading us to conjecture that not all quantum states admit a corresponding measurement. Furthermore, we investigate whether there can exist a set of local unitaries that transform \textit{any} state into a basis. While we show that such a state-independent construction cannot exist for general quantum states, we prove that it does exist for real-valued $n$-qubit states if and only if $n=2,3$, and that such constructions are impossible for any multipartite system of an odd local dimension. Our results suggest a rich relationship between entangled states and iso-entangled measurements with a strong dependence on both particle numbers and dimension. 
\end{abstract}


\maketitle


\section{Introduction}
Entanglement is a fundamental, broadly useful and an intensely studied feature of quantum mechanics. However, in spite being of arguably similar foundational significance, much less is known about the entanglement of joint quantum measurements than the entanglement of quantum states. Entangled measurements are crucial for seminal quantum information protocols such as teleportation \cite{Bennett1993}, dense coding \cite{Bennett1992} and entanglement swapping \cite{Zukowski1993}, which are instrumental for various quantum technologies. Typically, they are based on the paradigmatic Bell basis, which is composed of the four maximally entangled states  $(\ket{00}\pm \ket{11})/\sqrt{2}$ and $(\ket{01}\pm\ket{10})/\sqrt{2}$. In the same way that the Bell basis may be thought of as the measurement corresponding to the maximally entangled state, it is natural to ask whether entangled states in general can be associated with a corresponding entangled measurement. Studying the relationship between entangled states and entangled measurements is not only interesting for understanding quantum mechanics. It is also an invitation to explore, in the context of quantum information applications, the largely uncharted terrain of entangled measurements beyond the Bell basis and its immediate generalisations. Most notably, entangled measurements beyond the Bell basis are also increasingly interesting for topics such as network nonlocality \cite{Tavakoli2022} and entanglement-assisted quantum communication \cite{Tavakoli2021, Pauwels2022}.

Consider that we are given a pure quantum state $\ket{\psi}$ comprised of $n$ subsystems, each of dimension $d$. Is it possible to find a measurement, namely an orthonormal basis of the global $d^n$-dimensional Hilbert space, in which all basis states have the same degree of entanglement as $\ket{\psi}$? Specifically, we want to decide the existence of $d^n$ strings, $\{V_j\}_{j=1}^{d^n}$, of local unitary transformations, 
\begin{equation}\label{unitaries}
V_j=\bigotimes_{k=1}^n U_k^{(j)}
\end{equation}
where $U_{k}^{(j)}$ is a $d$-dimensional unitary operator, such that the set of states $\ket{\psi_j}\equiv V_j\ket{\psi}$ form a basis, i.e.~$|\braket{\psi_j}{\psi_{j'}}|=\delta_{jj'}$. If affirmative, we say that $\ket{\psi}$ admits a basis and we call the set of basis vectors $\{\ket{\psi_j}\}_{j=1}^{d^n}$ a $\ket{\psi}$-basis.

Known examples of entangled measurements can be accommodated in this picture. For example, the Bell basis can be obtained from operating on $\ket{\psi}=(\ket{00}+\ket{11})/\sqrt{2}$ with the four strings of local unitaries $\{V_j\}_{j=1}^4=\{\openone\otimes \openone, \openone\otimes X, Z\otimes \openone,Z\otimes X\}$, where $X$ and $Z$ are bit-flip and phase-flip Pauli operators. A well-known generalisation of the Bell basis to $n$ systems of dimension $d$ can be thought of as a $\ket{\text{GHZ}_{n,d}}$-measurement where the relevant state is the higher-dimensional GHZ state $\ket{\text{GHZ}_{n,d}}=\frac{1}{\sqrt{d}}\sum_{k=0}^{d-1} \ket{k}^{\otimes n}$. The corresponding strings of local unitaries are $V_j=Z_d^{j_1}\otimes X_d^{j_2}\otimes \ldots \otimes X_d^{j_n}\ket{\text{GHZ}_{n,d}}$ where $j=j_1\ldots j_n \in\{0,\ldots,d-1\}^n$ and where $Z_d=\sum_{l=0}^{d-1} e^{\frac{2\pi i}{d}l}\ketbra{l}{l}$ and $X_d=\sum_{l=0}^{d-1}\ketbra{l+1}{l}$ are generalised Pauli operators. 
More generally, any state that is locally maximally entanglable  is known to admit a basis via suitable unitaries of the form $V_j=U_1^{j_1}\otimes  \ldots\otimes U_n^{j_n}$ \cite{Kruszynska2009}. These states are characterised by the property that if each qubit is supplemented with a qubit ancilla and controlled unitary gates are performed on the state-ancilla pairs, then a maximally entangled bipartite state can be constructed between the collection of state-qubits and the collection of ancilla-qubits. However, this is far from a complete characterisation of the states that admit a basis, which is seen already in the restrictive form of the strings of unitaries. For example, the three-qubit $W$-state, $\ket{W_3}=(\ket{001}+\ket{010}+\ket{100})/\sqrt{3}$, is not locally maximally entanglable but is neverthelss known to admit a basis \cite{Miyake2005}.
The subject of local unitary equivalence has also been studied by investigating whether two quantum states are LU-equivalent and by introducing a method to determine the connecting unitary \cite{KrausB2010}.  In what follows, we systematically explore a related but different question, namely whether entangled states admit an entire local-unitary equivalent orthonormal basis and, as we will introduce later, whether such bases can be constructed even without prior knowledge of the state.

\section{Corresponding Bases for the simplest quantum states}

Let us begin with considering the simplest situation, namely when $\ket{\psi}$ is a state of two qubits. We constructively show that every such state admits a basis. To this end, we first apply the state-dependent local unitaries  $W_\psi^A\otimes W^B_\psi$ that map $\ket{\psi}$, via a Schmidt decomposition, into the computational basis, $\ket{\psi_S}=\lambda \ket{00}+\sqrt{1-\lambda^2}\ket{11}$ for some coefficient $0\leq \lambda\leq 1$. Then, we consider the action of the following four strings of local unitaries 
\begin{equation}\label{2qubit}
\begin{Bmatrix}
\openone\otimes \openone\\
\openone \otimes XZ\\
XZ\otimes Z\\
XZ\otimes X
\end{Bmatrix}.
\end{equation}
One can verify that this transforms $\ket{\psi_S}$ into a $\ket{\psi}$-basis. Notice that once the state has been rotated into the Schmidt form $\ket{\psi_S}$, the subsequent unitaries \eqref{2qubit} do not depend on $\lambda$. This construction can be extended to bipartite ($n=2$) states of local dimension $d=4$ and $d=8$. Again via Schmidt decomposition, we can find state-dependent local unitaries that transform $\ket{\psi}$ into  $\ket{\psi_S}=\sum_{l=0}^{d-1} \lambda_l\ket{ll}$ for some Schmidt coefficients $\sum_l \lambda_l^2=1$. 
\begin{result}
	All bipartite states $\ket{\psi}$ of two subsystems with local dimension $d=2$, $d=4$ and $d=8$ admit a $\ket{\psi}$-basis under local unitary transformations. For states in Schmidt form $\ket{\psi_S}$, the local unitaries are independent of the specific Schmidt coefficients.
\end{result}
The proof is presented in Appendix~\ref{AppPower2}.

A seemingly very different situation emerges for two qutrits, $(n,d)=(2,3)$. In this case we fail to find strings of local unitaries that bring the Schmidt decomposition $\ket{\psi_S}$ into a basis independently of the Schmidt coefficients. Nevertheless, a basis might still be possible to construct by letting the local unitaries depend on the Schmidt coefficients. Actually, this seems to always be possible. To arrive at this, we have used a numerical method.  Let $\{\ket{\phi_j}\}_{j=1}^m$ be a set of states in a given Hilbert space. These states are pairwise orthogonal if and only if they realise the global minimum (zero) of the following objective function
\begin{equation}\label{framepot}
f(\{\phi_j\})\equiv \sum_{j \neq j'} |\braket{\phi_j}{\phi_{j'}}|^2.
\end{equation}
For a given state $\ket{\psi}$, we numerically minimise $f(\{\psi_j\})$ over all possible strings $\{V_j\}_{j=1}^{d^n}$ of local unitaries. To this end, we parameterise the local unitaries $U_k^{(j)}$ using the scheme of Ref.~\cite{Spengler2012}. For the two-qutrit case, we have randomly chosen 1000 pairs of Schmidt coefficients $(\lambda_1,\lambda_2)$ which (up to local unitaries) fully specifies the state. In each case we numerically minimise $f(\{\psi_j\})$. Without exception, we find strings of local unitaries that yield a result below our selected precision threshold of $f\leq 10^{-6}$.

Furthermore, we have also numerically investigated the case of three qubits, $(n,d)=(3,2)$. This scenario requires a different approach than the previous cases since multipartite states have no Schmidt decomposition. Instead, for any given three-qubit state $\ket{\psi}$, there exists local unitary transformations that map it onto the canonical form $a\ket{000}+b\ket{011}+c\ket{101}+d\ket{110}+e\ket{111}$ where $(b,c,d,e)$ are real numbers and $a$ is a complex number \cite{Acin2000, Caretet2000}. Hence, up to local unitaries, the state space (after normalisation) is characterised by five real numbers. Later, we will provide an analytical construction of a $\ket{\psi}$-basis for the four-parameter family corresponding to restricting $a$ to be real. However, we have not found an analytical basis construction for general three-qubit states, but we nevertheless conjecture that it exists. To evidence this, we have employed the previously introduced numerical search method. Again, we have randomly chosen 1000 normalised sets of coefficients $(a,b,c,d,e)$ and searched for the minimal value of $f$ over all the strings of local qubit unitaries. In all cases, we find that $f$ vanishes up to our selected precision of $f\leq 10^{-6}$.
In summary we are left with
\begin{conjecture}
	All states $\ket{\psi}$ of two qutrits ($(n,d)=(2,3)$) and three qubits ($(n,d)=(3,2)$)  admit a $\ket{\psi}$-basis under local unitary transformations.
\end{conjecture}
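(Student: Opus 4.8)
Since this statement is flagged as a conjecture, I would not expect a clean closed-form proof; my plan aims either at an existence argument or, failing that, at the numerical evidence the authors ultimately rely on. First I would strip away all state-dependent freedom by passing to canonical forms: for two qutrits the Schmidt decomposition reduces every state, up to local unitaries, to $\ket{\psi_S}=\lambda_0\ket{00}+\lambda_1\ket{11}+\lambda_2\ket{22}$ with $\sum_l\lambda_l^2=1$, a two-parameter family; for three qubits the canonical form of Refs.~\cite{Acin2000, Caretet2000} leaves the five real parameters $(a,b,c,d,e)$. A quick check rules out the mechanism behind Result~1: applying $(X_3^{a}Z_3^{b})\otimes\openone$ to $\ket{\psi_S}$ gives overlaps proportional to $\sum_l\lambda_l^2\,\omega^{(b'-b)l}$, which vanish for all shifts only when the $\lambda_l$ are uniform, so any construction here must employ unitaries that genuinely depend on the Schmidt coefficients. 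This is exactly why the coefficient-independent trick available for $d=2,4,8$ breaks down.

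For an analytic existence proof I would attempt a continuity (open--closed) argument on the connected reduced parameter space. Closedness is easy: if a sequence of coefficient vectors that admit a $\ket{\psi}$-basis converges, then by compactness of the local unitary group a subsequence of the corresponding strings $\{V_j\}$ converges, and continuity of the inner products in \eqref{framepot} shows the limit is again a basis. The real content is openness: given a solution at some interior point, I would invoke the implicit function theorem, showing that the map sending the local-unitary parameters to the off-diagonal Gram entries is a submersion at that solution, so that all nearby coefficient vectors still admit a basis. Since the maximally entangled state (the generalised Bell point for two qutrits, the GHZ point for three qubits) is a known solution, openness together with closedness on a connected domain would propagate it to the whole family.

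The main obstacle is precisely the regularity (transversality) needed for openness. A dimension count is encouraging but not decisive: for two qutrits one has on the order of $8\times16=128$ free real parameters in the eight nontrivial strings against $\binom{9}{2}=36$ complex, i.e.\ $72$, orthogonality constraints, and for three qubits about $7\times9=63$ parameters against $\binom{8}{2}=28$ complex constraints; in both cases there is ample slack, making it plausible that the constraint map is generically a submersion. However, this must hold at \emph{every} coefficient value, and I would expect transversality to be most fragile on the degenerate strata where two or more Schmidt coefficients coincide (or where $(a,\dots,e)$ acquire extra symmetry), since the enhanced symmetry there can collapse the rank of the Jacobian. Controlling these special loci is where an otherwise clean argument is most likely to stall, and it is the reason the statement remains a conjecture rather than a theorem.

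Failing a rigorous resolution, I would fall back on the numerical route: parametrise the local unitaries, sample the reduced parameter space densely, and minimise the frame potential $f$ of \eqref{framepot} to machine precision, treating a uniformly vanishing minimum across the sampled family as the strong evidence that underwrites the conjecture.
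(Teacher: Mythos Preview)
Your proposal is sound and in fact more ambitious than what the paper does. The paper offers no analytic argument at all for Conjecture~1: it simply passes to the same canonical forms you use (Schmidt for two qutrits, the Ac\'{\i}n--Carteret five-parameter form for three qubits), samples $1000$ random coefficient vectors in each case, and minimises the frame potential \eqref{framepot} numerically, reporting that $f\leq 10^{-6}$ was always reached. The only partial analytic result the paper supplies is for the four-parameter real-$a$ slice of three-qubit states, handled later by the state-independent construction of Result~4. Your open--closed continuity strategy via the implicit function theorem is not attempted in the paper; it is a genuinely different and potentially more powerful route, and your diagnosis of where it stalls---establishing transversality of the constraint map uniformly, especially on the degenerate-symmetry strata---is exactly the obstruction one would expect. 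Your numerical fallback coincides with the paper's actual evidence, so on that part you and the authors agree verbatim.
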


\section{Existence of states without a corresponding basis}
Given the above case studies, one might suspect that every pure quantum state admits a basis. Interestingly, this seems not to be true. While some states of four qubits, $(n,d)=(4,2)$, are found to admit a basis, for example a Dicke state \cite{Tanaka2007}, it appears that most four-qubit states do not admit a basis.
\begin{conjecture}
\label{conj:nonexist}
	There exist four-qubit states that do not admit a basis under local unitary transformations. 
\end{conjecture}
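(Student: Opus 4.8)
\emph{Proof idea.} The plan is to establish non-existence through a parameter count showing that the orthonormality conditions become overdetermined precisely at $(n,d)=(4,2)$, and then to upgrade this heuristic into a rigorous dimension statement. First I would count the free real parameters carried by the sixteen strings $V_j=\bigotimes_{k=1}^4 U_k^{(j)}$. Since the ray $V_j\ket{\psi}$ is insensitive to an overall phase, and the four local phases of a string enter only through their product, each string depends effectively on its four $SU(2)$ factors, contributing $4\times 3=12$ real parameters; the sixteen strings thus carry $192$ parameters. Left multiplication of every string by a fixed local unitary $W=W_1\otimes\cdots\otimes W_4$ preserves both the product structure and orthonormality, so $12$ of these are pure gauge (equivalently one may fix $V_1=\openone$), leaving $180$ effective parameters. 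Orthonormality imposes $\binom{16}{2}=120$ complex, i.e.\ $240$ real, equations, while normalisation is automatic. Since $180<240$, the system is overdetermined and a generic four-qubit state is expected to admit no basis.

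Before attempting rigor I would run the same bookkeeping as a consistency check against the known positive cases. For $(2,2)$, $(3,2)$, $(2,3)$ and $(2,4)$ the parameter count exceeds the number of constraints (for instance $(3,2)$ gives $63$ versus $56$), so the systems are underdetermined, in agreement with Result~1 and Conjecture~1; $(4,2)$ is the first scenario in which the balance tips, which is exactly where failure is numerically observed. This lends credibility to the conjecture, but is not yet a proof, because an overdetermined real-analytic system can still be solvable along a positive-dimensional subvariety of states.

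To make the argument rigorous I would view ``admits a basis'' as membership of $\ket{\psi}$ in the image of the polynomial map (after Cayley-parametrising the unitaries) that sends $(\{V_j\},\ket{\psi})$, subject to the $240$ orthonormality equations, to $\ket{\psi}$. By Tarski--Seidenberg this image is a semialgebraic, local-unitary-invariant set $\mathcal{S}$ in the four-qubit state space. It then suffices to show that $\dim\mathcal{S}$ is strictly smaller than the dimension of the full (LU-reduced) state space, for then the complement of $\mathcal{S}$ is open and nonempty and exhibits an entire open family of states with no basis. The decisive technical input is the generic rank of the Jacobian of the orthonormality map, which I would evaluate at a well-chosen base point---such as a highly symmetric state, or one for which a near-orthogonal string assignment is already known---and verify that the image has codimension at least one.

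The step I expect to be the main obstacle is exactly this last one: ruling out that the $240$ equations degenerate into fewer independent ones along a subvariety that still meets every local-unitary orbit, i.e.\ that solvability is ``accidental'' and state-independent. Uniform control of the Jacobian rank is delicate because the map is far from generic, factoring through the tensor-product structure. A cleaner but harder-to-find alternative is an explicit invariant obstruction: grouping the four qubits into blocks $A|B$, every $\ket{\psi}$-basis satisfies $\sum_j \rho_A^{(j)} = d_B\,\openone_A$, while each $\rho_A^{(j)}=\mathrm{Tr}_B\,(V_j\ketbra{\psi}{\psi}V_j^{\dagger})$ is isospectral to $\rho_A=\mathrm{Tr}_B\,\ketbra{\psi}{\psi}$. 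Producing a higher local-unitary polynomial invariant whose basis-averaged value is forced by this identity, yet cannot be matched by sixteen iso-entangled terms, would pin down a concrete four-qubit $\ket{\psi}$ with no basis; identifying such an invariant is, I expect, the crux of a fully rigorous proof.
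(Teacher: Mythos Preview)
The statement is a \emph{conjecture} in the paper and is not proved there. The paper's support is numerical---repeated failure of the minimisation of $f$ in Eq.~\eqref{framepot} for many sampled four-qubit states, with the explicit example $\tfrac{2}{\sqrt{6}}\ket{W}+\tfrac{\sqrt{2}}{\sqrt{6}}\ket{\text{GHZ}_{4,2}}$ never reaching below $f=10^{-1}$ over $100$ restarts, and the further observation that at most $N=12$ mutually orthogonal local-unitary copies seem attainable---together with exactly the parameter-counting heuristic you give: $3n(2^n-1)=180$ free real parameters versus $2^{2n}-2^n=240$ real orthogonality constraints, the latter first exceeding the former at $n=4$ (and, for partial sets, $12(N-1)$ versus $N^2-N$ first tips at $N=13$). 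The authors also report that SDP outer relaxations failed to certify a nonzero lower bound on $f$, and they list a proof of this conjecture as the main open problem of the work.

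Your proposal therefore reproduces the paper's heuristic exactly and then goes beyond it by sketching two routes to rigour: a semialgebraic/Jacobian-rank argument and an invariant obstruction via $\sum_j\rho_A^{(j)}=d_B\,\openone_A$. Both are sensible directions, but as you yourself flag, the decisive step is not carried out. Since GHZ, graph, $W$ and Dicke states \emph{do} admit bases, the incidence variety is nonempty and the $240$ constraints are certainly dependent somewhere; a genuine Jacobian-rank computation (or an explicit obstructing invariant) is therefore required and cannot be replaced by the count alone. In short, your write-up is an honest proof \emph{plan} that is aligned with and extends the paper's own intuition, but neither you nor the paper has a proof, and you have correctly located the gap.
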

We have sampled many different four-qubit states and repeatingly attempted to numerically find a basis via the minimisation of \eqref{framepot}, also using several different search algorithms. It was regularly found that the estimated minimum is multiple orders of magnitude above our given precision threshold for a basis. For example, we searched for the minimum of $f$ for the state $\frac{2}{\sqrt{6}}\ket{W}+\frac{\sqrt{2}}{\sqrt{6}}\ket{\text{GHZ}_{4,2}}$, with 100 randomised initial points, and never reached below $f=10^{-1}$, five orders of magnitude above our precision threshold. We have attempted to prove that no basis exists by employing semidefinite outer relaxations of $f$ over the set of dimensionally-restricted quantum correlations \cite{Navascues2015} combined with a modified sampling of the state and measurement space \cite{Distrust2021} and symmetrisation techniques \cite{Tavakoli2019Sym} to efficiently treat the large number  of single-qubit unitaries featured in this problem. However, the conjecture has resisted our efforts. A guiding intuition is that the number of free parameters is $3n(2^n-1)$ whereas the number of orthogonality constraints (both real and imaginary part) is $2^{2n}-2^{n}$, and the latter is larger than the former only when $n\geq 4$. Also, we numerically minimized $f$ for the same state, with a lesser amount of required orthonormal states than a full basis. The numerics suggest that $N=12$ orthonormal states are possible to find, while for $N=13$ we never reached below $f=10^{-3}$. This is also in line with the above parameter-counting-argument. For $N$ orthonormal states and $n=4$ qubits, we have $12(N-1)$ free parameters and $N^2-N$ orthogonality constraints. The latter is larger than the former for $N \geq 13$.
This also has consequences for the existence of states that admit a basis for an arbitrary number of qubits.

\begin{result}
	If Conjecture~\ref{conj:nonexist} is true, namely that some four-qubit states do not admit a basis, then the same holds for states of any number of qubits bigger than four.
\end{result}

\begin{proof}
We show, that if an $n$-qubit state $\ket{\psi}$ does not admit a basis, then the $(n+1)$-qubit state $\ket{\psi'}=\ket{\psi}\otimes \ket{0}$ also does not admit a basis. This inductively validates the above theorem. By contradiction, suppose there are $2^{n+1}$ unitaries $V'_j=V_j\otimes U^{(j)}_{n+1}$ such that $|\bracket{\psi'}{(V'_j)^\dagger V'_{k}}{\psi'}|=\delta_{jk}$ $\forall j,k\in \{1,...,2^{n+1}\}$. Divide the $2^{n+1}$ states $U^{(j)}_{n+1}\ket{0}$ into two sets such that two orthogonal vectors are not in the same set (e.g.~the northern and southern hemisphere of the Bloch ball). Consider the set that contains at least as many elements as the other one, hence, at least $2^{n}$ elements. By construction, these states cannot be distinguished on the last qubit, $|\bracket{0}{U_{n+1}^{(j)\dagger} U^{(k)}_{n+1}}{0}|\neq 0$. Since $|\bracket{\psi'}{(V'_j)^\dagger V'_{k}}{\psi'}|=|\bracket{\psi}{V_j^{\dagger} V_{k}}{\psi}|\cdot |\bracket{0}{U_{n+1}^{(j)\dagger} U^{(k)}_{n+1}}{0}|$, we must have $|\bracket{\psi}{V_j^{\dagger} V_{k}}{\psi}|=\delta_{jk}$ for all of those pairs, which contradicts that $\ket{\psi}$ does not admit a basis. 
\end{proof}

\section{Special families of states with a basis}
Since not all pure quantum states admit a basis, and this seems to be typical rather than exceptional for four qubits, it is interesting to ask whether some distinguished families of $n$-qubit states can nevertheless admit a basis. This is well-known to be the case for $n$-qubit GHZ-states and graph-states since they are locally maximally entanglable. More interestingly, a positive answer is also possible for states that are not of this kind: we construct a basis for the $n$-qubit $W$-state, $\ket{W_n}=\frac{1}{\sqrt{n}}\sum_{\sigma} \sigma(\ket{0}^{\otimes n-1} \ket{1})$ where $\sigma$ runs over all permutations of the position of ``$1$''. Note that $\ket{W_1}=\ket{1}$ and that a $\ket{W_1}$-basis is obtained from the unitaries $\{\openone, X\}$. Now we apply induction. Consider that the strings $\{V_j^{(n)}\}_{j=1}^{2^n}$ generate a $\ket{W_n}$-basis. One can then construct a basis for $n+1$ qubits as follows. For half of the basis elements, namely $j=1,\ldots,2^n$, define $V_j^{(n+1)}=V_j^{(n)} \otimes \openone$ and for the other half, namely $j=2^n+1,\ldots, 2^{n+1}$, define $V_j^{(n+1)}=\bigotimes_{k=1}^n U_k^{(j)}Z \otimes X$. As we detail in Appendix~\ref{AppW}, one can verify that $\{V_j^{(n+1)}\ket{W_{n+1}}\}_j$ is a $W$-basis. We note that for the purpose of entanglement distillation, a different construction of a $W$-basis was given in Ref.~\cite{Miyake2005}.

\section{State-independent basis construction}
So far, we have considered whether a specific state can be associated to a specific measurement. In other words, the unitary constructions have been state-dependent. We now go further and introduce a complementary concept, namely whether state-independent basis construction exist.

\begin{definition}
	Strings of local unitary transformations $\{V_j\}_{j=0}^{d^n}$ that can transform any state in a space of states $\mathcal{S}$ into a basis, i.e.~strings of local unitaries that satisfy 
	\begin{align*}
&\forall \psi\in\mathcal{S}, \hspace{-5mm} &&|\bracket{\psi}{V_j^\dagger V_{j'}}{\psi}|=\delta_{jj'}
\end{align*}
are called state-independent basis constructions.
\end{definition}

Naturally, this state-independent notion of basis construction is much stronger than the previously considered state-dependent notion. In the most ambitious case, we choose the space $\mathcal{S}$ to be the entire Hilbert space of $n$ subsystems of dimension $d$, i.e.~$\mathcal{S}\simeq (\mathbb{C}^d)^{\otimes n}$.
\begin{result}
	If $\mathcal{S}\simeq (\mathbb{C}^d)^{\otimes n}$ then a state-independent basis construction cannot exist.
\end{result}
\begin{proof} 
In fact, not even two orthogonal vectors can be state-independently constructed for the full quantum state space. To show this, we can w.~l.~g.~set $V_1=\openone$ and assume that there exists local unitaries $\{U_k\}$ such that $\ket{\psi_1}=\ket{\psi}$ and $\ket{\psi_2}=\bigotimes_{k=1}^n U_k\ket{\psi}$ are orthogonal for all $\ket{\psi}$. Focus now on the particular state $\ket{\psi}=\bigotimes_{k=1}^n \ket{\mu_k}$ where $\ket{\mu_k}$ is some eigenvector of the unitary $U_k$. Since the eigenvalues of a unitary are complex phases, written $e^{i\varphi_k}$ for $U_k$ and $\ket{\mu_k}$, we obtain $\ket{\psi_1}=\bigotimes_{k=1}^n \ket{\mu_k}$ and $\ket{\psi_2}=e^{i\sum_{k=1}^n \varphi_k}\bigotimes_{k=1}^n \ket{\mu_k}$. These two states are evidently not orthogonal and hence we have a contradiction.
\end{proof}

Interestingly, the situation changes radically if we limit our state-independent investigation to all quantum states in a real-valued Hilbert space. That is,  $\mathcal{S}\simeq (\mathbb{R}^d)^{\otimes n}$.  Such real quantum systems have also been contrasted in the literature with their complex counterparts \cite{Renou2021, Wootters2012, Wu2021}.  Let us momentarily ignore the $n$-partition structure of our Hilbert space and simply consider two real states, connected by a unitary.
\begin{lemma}
	A unitary transformation $U$ maps every real state  $\ket{\psi}$ to an orthogonal state  $U\ket{\psi}$, i.e. $\bracket{\psi}{U}{\psi}=0$, if and only if  $U$ is skew-symmetric ($U=-U^T$). \label{lemma1}
\end{lemma}
\begin{proof}

First, we assume the skew-symmetry property $U=-U^T$. For real states $\ket{\psi}$ we obtain $\bracket{\psi}{U}{\psi}=\bracket{\psi}{U^\dagger}{\psi}^*=\bracket{\psi}{U^T}{\psi}$. Using the skew-symmetry property $U=-U^T$, this implies $\bracket{\psi}{U}{\psi}=\bracket{\psi}{U^T}{\psi}=-\bracket{\psi}{U}{\psi}$, hence $\bracket{\psi}{U}{\psi}=0$. Conversely, assume that $\bracket{\psi}{U}{\psi}=0$ for all real-valued $\ket{\psi}$. Choosing in particular $\ket{\psi}=\ket{k}$ for $k=0,\ldots,d-1$, it follows that all diagonal elements of $U$ must vanish. Then, choose $\ket{\psi}=\frac{1}{\sqrt{2}}(\ket{i}+\ket{j})$ for any pair $i\neq j$. This yield $U_{ii}+U_{jj}+U_{ij}+U_{ji}=0$, but since we know that the diagonals vanish we are left with just $U_{ij}=-U_{ji}$ which defines a skew-symmetric operator.
\end{proof}

Returning to our $n$-partitioned real Hilbert space, and still w.~l.~g.~taking $V_1=\openone$, the above result demands that we find local unitaries such that 
\begin{equation}
U_1\otimes \ldots \otimes U_n=- U_1^T\otimes \ldots \otimes U_n^T.
\end{equation}
This is only possible if $U_k^T=\pm U_k$. Hence, all local unitaries must be either symmetric or skew-symmetric, and the number of the latter must be odd. When extended from two orthogonal states to a whole basis, we require that this property holds for every pair of distinct labels $(j,j')$ in the basis. 
\begin{corollary}
\label{cor:skewsym}
 The $d^n$ strings of local unitaries $\{V_j\}_{j=1}^{d^n}$ form a state-independent basis construction on $\mathcal{S}\simeq (\mathbb{R}^d)^{\otimes n}$ if and only if $(V_j)^\dagger V_{j'}$ is skew-symmetric for every $j\neq j'$.
\end{corollary}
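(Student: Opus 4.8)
The plan is to reduce the characterization of a state-independent basis construction directly to the pairwise orthogonality condition, and then apply Lemma~\ref{lemma1} to each pair. By Definition, the strings $\{V_j\}$ form a state-independent basis construction on $\mathcal{S}\simeq(\mathbb{R}^d)^{\otimes n}$ precisely when $|\bracket{\psi}{V_j^\dagger V_{j'}}{\psi}|=\delta_{jj'}$ holds for \emph{every} real state $\ket{\psi}$ and every pair $(j,j')$. Since the diagonal case $j=j'$ is automatic (any $V_j$ is unitary, so $\bracket{\psi}{V_j^\dagger V_j}{\psi}=1$), the entire content of the condition lives in the off-diagonal pairs $j\neq j'$, where it reads $\bracket{\psi}{V_j^\dagger V_{j'}}{\psi}=0$ for all real $\ket{\psi}$.

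First I would fix an arbitrary pair $j\neq j'$ and set $U\equiv V_j^\dagger V_{j'}$. Because $V_j$ and $V_{j'}$ are both unitary, their product $U$ is again a unitary operator on the global Hilbert space. Hence the requirement ``$\bracket{\psi}{U}{\psi}=0$ for all real $\ket{\psi}$'' is exactly the hypothesis of Lemma~\ref{lemma1}, whose conclusion is that this holds if and only if $U$ is skew-symmetric, i.e.\ $U=-U^T$. Applying this equivalence to each off-diagonal pair independently yields: the family $\{V_j\}$ is a state-independent basis construction if and only if $(V_j)^\dagger V_{j'}$ is skew-symmetric for every $j\neq j'$, which is precisely the statement of the corollary.

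The only subtlety to address is the logical passage from ``for all $\psi$ and all pairs'' to ``for all pairs, for all $\psi$,'' and the confirmation that Lemma~\ref{lemma1} applies verbatim to the global operator $U=V_j^\dagger V_{j'}$ rather than to a single local unitary. This is immediate once one observes that the lemma is stated for an arbitrary unitary on the full space and makes no reference to tensor-product structure, so no factorization of $U$ is needed at this stage. I expect no genuine obstacle here: the corollary is essentially a direct, pairwise transcription of Lemma~\ref{lemma1}, and the proof amounts to quantifier bookkeeping plus the observation that products of local unitaries are unitary. (The harder, more interesting work---namely deciding \emph{when} such globally skew-symmetric products can actually be assembled from local factors---is deferred to the subsequent analysis, where the constraint $U_k^T=\pm U_k$ with an odd number of skew-symmetric factors comes into play.)
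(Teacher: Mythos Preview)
Your proposal is correct and follows essentially the same route as the paper: reduce the basis condition to pairwise orthogonality $\bracket{\psi}{V_j^\dagger V_{j'}}{\psi}=0$ for all real $\ket{\psi}$ and all $j\neq j'$, then invoke Lemma~\ref{lemma1} on the global unitary $U=V_j^\dagger V_{j'}$ for each pair. Your write-up is in fact a bit more careful than the paper's, explicitly noting that the diagonal case is automatic and that Lemma~\ref{lemma1} applies to any unitary on the full space regardless of tensor structure.
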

\begin{proof}
    Since the strings of local unitaries $\{V_j\}$ shall form a state-independent basis construction, the states $\ket{\psi_j}=V_j \ket{\psi}$ have to be pairwise orthogonal for every real state $\ket{\psi}$. Therefore $\bra{\psi}(V_j)^\dagger V_{j'}\ket{\psi}=0$ for all $j\neq j'$ and for all real states $\ket{\psi}$. By Lemma~\ref{lemma1}, all of these $(V_j)^\dagger V_{j'}$ have to be skew-symmetric. At the same time, if all $(V_j)^\dagger V_{j'}$ are skew-symmetric, the states $\ket{\psi_j}$ are pairwise orthogonal and form a basis.
\end{proof}

The question becomes whether the above condition can be satisfied for a given scenario. Consider it first for qubit systems ($d=2$).
\begin{lemma}
\label{lemma:paulireduction}
	In qubit-systems the set of complex local unitaries that are either symmetric or skew-symmetric and whose products are again either symmetric or skew-symmetric, must obey a simple structure; they are equivalent to the four Pauli-type operators $\mathcal{P}\equiv \{\openone, X,Z,XZ\}$.
\end{lemma}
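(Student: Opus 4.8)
The plan is to reduce the claim to a classification of single-qubit unitaries and then track how the symmetry type behaves under multiplication. First I would record the two elementary building blocks. Since the space of $2\times2$ skew-symmetric matrices is one-dimensional and spanned by $XZ$, every skew-symmetric unitary equals a phase times $XZ$; hence all skew-symmetric elements are equivalent to the single Pauli-type operator $XZ$. For the symmetric case, writing a unitary as a global phase times an $SU(2)$ element and using $X^T=X$, $Z^T=Z$, $Y^T=-Y$, the symmetry condition removes the $Y$-component, so every symmetric unitary is, up to a phase, of the form $A=a_0\openone+iM$ with $M=m_1X+m_3Z$ a real, symmetric, traceless matrix and $a_0^2+|M|^2=1$. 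I will call the direction of $(m_1,m_3)$ the axis of $A$.

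The computational core is to decide, for two symmetric elements $A=a_0\openone+iM$ and $B=b_0\openone+iN$, when $A^\dagger B$ has a definite symmetry type. Using $MN=(M\cdot N)\openone+(M\times N)\,XZ$ with $M\times N:=m_1n_3-m_3n_1$, a direct expansion gives
\begin{equation}
A^\dagger B=\big[a_0b_0+M\cdot N\big]\openone+i\big[a_0N-b_0M\big]+(M\times N)\,XZ .
\end{equation}
Because $\openone,M,N$ are symmetric while $XZ$ is skew, this is symmetric iff $M\times N=0$ (parallel axes) and skew iff its symmetric part vanishes. Imposing the latter and using $a_0^2+|M|^2=b_0^2+|N|^2=1$ forces $a_0=b_0=0$ and $M\cdot N=0$; that is, a skew product of two symmetric unitaries can occur only between two \emph{reflections} $iM,iN$ whose axes are orthogonal.

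From here the structure follows. If the set contains two symmetric elements with non-parallel axes, their product is not symmetric, hence must be skew, so both are orthogonal reflections; a real orthogonal change of basis, which preserves the transpose and hence all symmetry types, rotates these axes onto $Z$ and $X$, identifying the two elements with $Z$ and $X$ up to a phase. I would then show that any further symmetric element $C=c_0\openone+iP$ must, by requiring $C^\dagger(iX)$ and $C^\dagger(iZ)$ to each be symmetric or skew, have $P$ parallel to $X$ or $Z$ and, whenever its axis is non-trivial, satisfy $c_0=0$; this pins $C\in\{\openone,X,Z\}$ up to a phase and excludes any third axis. Adjoining the unique skew direction $XZ$ then yields $\mathcal{P}=\{\openone,X,Z,XZ\}$.

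The main obstacle is the degenerate branch in which all symmetric elements share a single axis: these form a continuous abelian family $e^{i\theta R}$, every pairwise product of which is symmetric, so the literal hypotheses alone do not force discreteness. I would resolve this exactly as the application demands: such a family never produces a skew product, so on that qubit it can be replaced by $\openone$ without affecting the construction, and as soon as either a second axis or the skew element $XZ$ is present the computation above collapses the continuum onto $\{\openone,Z\}$, respectively $\{\openone,X,Z\}$. Thus, modulo global phases and a common orthogonal change of basis, any set satisfying the hypotheses and actually realising orthogonality is equivalent to a subset of $\mathcal{P}$. Handling these equivalences cleanly, rather than any single calculation, is where the care is needed.
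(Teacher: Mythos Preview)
Your argument is correct and reaches the same conclusion as the paper, but the route differs in two ways worth noting. The paper parameterises qubit unitaries by explicit matrix entries and \emph{first} invokes the application context to force $XZ\in\mathcal{P}$ (if $XZ$ never acts on qubit $i$, the product state $\ket{0\cdots0}\otimes\ket{\phi_i}\otimes\ket{0\cdots0}$ cannot be sent to a basis); it then uses $(XZ)^\dagger U$ as a probe to collapse the continuous symmetric family, and only afterwards multiplies two symmetric unitaries to fix the remaining angle, finishing with a real conjugation $W$. You instead work in the Pauli basis from the outset, writing symmetric unitaries as $a_0\openone+i(m_1X+m_3Z)$ and using $MN=(M\!\cdot\!N)\openone+(M\!\times\!N)XZ$ to read off in one line that a product of two symmetric unitaries is symmetric iff the axes are parallel and skew iff both are reflections with orthogonal axes; this is cleaner algebra and makes the parallel/orthogonal dichotomy transparent. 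The trade-off is that you meet the degenerate single-axis family only at the end and must then appeal to the application, whereas the paper's front-loading of $XZ$ dispatches it automatically. One small imprecision: your claim that the single-axis family ``can be replaced by $\openone$ without affecting the construction'' is true for the skew-symmetry condition on $V_j^\dagger V_{j'}$ but not for the requirement of $d^n$ distinct basis elements; the actual resolution is your next clause (and the paper's argument) that the application forces $XZ$ to appear, after which your own computation collapses the family to $\{\openone,Z\}$ up to phase.
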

 The proof is presented n Appendix~\ref{AppPauliStructure}. Thus, if a state-independent construction exists, we can restrict to selecting one of these four operators for each of our local unitaries $U_k^{(j)}$. Interestingly, 
\begin{result}
	In the cases of real two-, and three-qubits, a state-independent construction is possible.
\end{result}
\begin{proof}
For two qubits $(n,d)=(3,2)$, it is in fact given by Eq.~\eqref{2qubit}. One can straightforwardly verify that the criterion in Corollary~\ref{cor:skewsym} is satisfied, i.e.~all local unitaries are selected from $\mathcal{P}$ and all pairs of products of unitary strings in \eqref{2qubit} are skew-symmetric. Alternatively, one can easily verify that \eqref{2qubit} maps every state $\sum_{i,j=0,1}\alpha_{ij}\ket{ij}$ into a basis, for any real coefficients $\alpha_{ij}$.
Furthermore, by the same token, an explicit state-independent basis construction for every real state of three qubits, $(n,d)=(3,2)$, that satisfies our necessary and sufficient criterions, is given by the following set of eight strings of local unitaries
\begin{equation}\nonumber
\begin{Bmatrix}
&\openone \otimes \openone\otimes \openone\\\nonumber
& Z\otimes Z \otimes XZ\\\nonumber
& Z \otimes XZ\otimes \openone\\\nonumber
& XZ \otimes \openone\otimes \openone\\\nonumber
& Z \otimes X\otimes XZ\\\nonumber
& X \otimes \openone\otimes XZ\\
& X \otimes XZ\otimes Z\\
& X \otimes XZ\otimes X
\end{Bmatrix}.
\end{equation}
Again, one may easily verify that every real state $\sum_{i,j,k=0,1}\alpha_{ijk}\ket{ijk}$ is mapped into a basis.
\end{proof}

\begin{table*}[ht]
	\centering
	\begin{tabular}{|c|c|c|c|c|c|c|c|c|}
		\hline
		& (2,2,$\mathbb{R}$) & (2,2,$\mathbb{C}$) & (3,2,$\mathbb{R}$) & (3,2,$\mathbb{C}$) & (4,2,$\mathbb{R}$) &  (2,3,$\mathbb{C}$) & (2,$4$ or $8$,$\mathbb{C}$) & ($n$, $2m+1$,$\mathbb{R}$) \\ \hline
		\begin{tabular}[c]{@{}c@{}}State-dependent\\ construction\end{tabular}   &   \vmark                              &  \vmark                                & \vmark        & (\vmark)        &  (\xmark)       & (\vmark)            &    \vmark            & $---$            \\ \hline
		\begin{tabular}[c]{@{}c@{}}State-independent\\ construction\end{tabular} &    \vmark                             &  \xmark                               &    \vmark     & \xmark        &  \xmark       &  \xmark               &           \xmark      & \xmark           \\ \hline
	\end{tabular}
	\caption{Overview of results. The first row indicates the scenario: $(n,d,\mathcal{S})$  gives particle number, dimension and the type of state space respectively. The symbol \vmark indicates the existence of a basis under local unitaries. The symbol \xmark indicates that there in general can be no basis under local unitaries, i.e.~at least one state admits no basis. Paranthesis indicates that the result is obtained from numerical search. The symbol $---$ indicates that no investigation was made.}\label{tabresults}
\end{table*}

Two- and three-qubits are interesting cases because they are exceptional. 
\begin{result}
There exist no state-independent construction for real states of four or more qubits. 
\end{result}
\begin{proof}
We first prove this for $n=4$ and then show that this implies impossibility also for $n>4$. The four-qubit case contains 16 strings of unitaries and by Lemma~\ref{lemma:paulireduction} we know that each local unitary can w.~l.~g.~be selected from $\mathcal{P}$. Since we seek a state-independent construction, we can momentarily consider only the state $\ket{0000}$. In order for it to be mapped into a basis, we see that $Z$ acts trivially on every register and therefore  each one of the 16 combinations of bit-flip or identity operators, $\{X^{c_1}\otimes X^{c_2} \otimes X^{c_3} \otimes X^{c_4}\}$ for $c_1,c_2,c_3,c_4\in\{0,1\} $, must be featured in exactly one of the 16 unitary strings $\{V_j\}_{j=1}^{16}$. Let us now look only at six of these strings, namely those corresponding to having zero bit-flips (1 case), one bit-flip (4 cases) and four bit-flips (1 case). W.~l.~g.~fixing $V_1=\openone$ (zero bit-flips), the strings take the form
\begin{equation}
\begin{array}{c||cccccccc}
V_1 & \id & \otimes & \id & \otimes & \id & \otimes & \id\\
V_2 & XZ^{r_{11}} & \otimes & Z^{r_{12}} & \otimes & Z^{r_{13}} & \otimes & Z^{r_{14}}\\
V_3 & Z^{r_{21}} & \otimes & XZ^{r_{22}} & \otimes & Z^{r_{23}} & \otimes & Z^{r_{24}}\\
V_4 & Z^{r_{31}} & \otimes & Z^{r_{32}} & \otimes & XZ^{r_{33}} & \otimes & Z^{r_{34}}\\
V_5 & Z^{r_{41}} & \otimes & Z^{r_{42}} & \otimes & Z^{r_{43}} & \otimes & XZ^{r_{44}}\\
V_6 & XZ^{r_{51}} & \otimes & XZ^{r_{52}} & \otimes & XZ^{r_{53}} & \otimes & XZ^{r_{54}}
\end{array},
\label{4qubit}
\end{equation}
where $r_{ij}\in\{0,1\}$ represent our freedom to insert a $Z$ operator and thus realise the two relevant elements of $\mathcal{P}$. Since every row must be skew-symmetric and the only skew-symmetric element in $\mathcal{P}$ is $XZ$, we must have $r_{11}=r_{22}=r_{33}=r_{44}=1$ and $r_{51}+r_{52}+r_{53}+r_{54}=1$ where addition is modulo two. Moreover, every product of two rows must be skew-symmetric, i.e.~the product must have an odd number of $XZ$ operations. For the four middle rows, this implies $r_{ij}+r_{ji}=1$ for distinct indices $i,j\in\{1,2,3,4\}$. For the products $V^\dagger_6V_j$ for $j=2,3,4,5$, the conditions for skew-symmetry respectively become
\begin{align}\nonumber
& r_{12}+r_{13}+r_{14}+r_{52}+r_{53}+r_{54}=1\\\nonumber
& r_{21}+r_{23}+r_{24}+r_{51}+r_{53}+r_{54}=1\\\nonumber
& r_{31}+r_{32}+r_{34}+r_{51}+r_{52}+r_{54}=1\\
& r_{41}+r_{42}+r_{43}+r_{51}+r_{52}+r_{53}=1.
\end{align}
Summing these four equations and using the previously established skew-symmetry conditions, one can cancel out  all degrees of freedom $r_{ij}$ and arrive at the contradiction $1=0$. Hence, we conclude that the state-independent basis construction for four qubits is impossible. 

For the case of five qubits, we can again assume w. l. g. that the 32 combinations of bit-flip or identity operators, $\{X^{c_1}\otimes X^{c_2} \otimes X^{c_3} \otimes X^{c_4} \otimes X^{c_5}\}$ for $c_1,c_2,c_3,c_4,c_5\in\{0,1\} $ must be featured in exactly one of the 32 unitary strings since the state $\ket{00000}$ has to be mapped into an orthonormal basis. Suppose there is a state-independent construction that maps every real-valued five-qubit state into a basis, in especially any state of the form $\ket{\psi}\otimes \ket{0}$, where $\ket{\psi}$ is an arbitrary real-valued four qubit state. Now consider the 16 strings where $c_5=0$. Since the fifth qubit is always mapped to itself, it has to hold that the first four qubits are pairwise distinguishable. However, this implies a state-independent construction for four qubits which is in contradiction to the above. By induction, this implies that no state-independent construction can exist whenever $n\geq~4$.
\end{proof}

The possibility of state-independent constructions for real-valued bi- and tri-partite systems draws heavily on the simple structure of skew-symmetric qubit unitaries. If we consider real-valued systems of dimension $d>2$, the situation changes considerably. 
\begin{result}
	State-independent constructions are impossible for all states of systems with odd local dimensions, i.e.~when $(n,d)=(n,2m+1)$.
\end{result}
\begin{proof}
This stems from the fact that there exists no skew-symmetric unitary matrix in odd dimensions. To see that, simply note that if $A$ is skew-symmetric then $\det(A) = \det(A^T) = \det(-A) =  (-1)^{2m+1} \det(A)=-\det(A)$ and hence $\det(A)=0$, but that contradicts unitarity because the determinant of a unitary has unit modulus.  
\end{proof}

\section{Conclusion}

In summary, we have investigated the correspondence between entangled states and entangled measurements under local unitary transformations, both when the local transformation can and cannot explicitly depend on the target state. Perhaps surprisingly, we have found that this problem is not so straightforward and has a strong dependence on both the number of subsystems involved and their dimension.  Our analytical and numerical results and conjectures are summarised in Table~\ref{tabresults}.

The main open problem left by our work is to prove that there exists four-qubit states that admit no basis under local unitaries. A related question is to bound the volume of such states. Moreover, for the state-independent considerations, we focused on real Hilbert spaces while other natural spaces are left to explore, e.g.~states with a fixed entanglement entropy or symmetric $n$-qubit subspaces. Notably, questions of this type naturally continue recent efforts to explore the role of entangled measurements (beyond the Bell basis) in quantum correlation scenarios \cite{Gisin2019, Branciard2021, Tang2020, Huang2022, Bäumer2021}. For example, a notable shortcomming of the traditional multiqubit entanglement swapping protocol is that the loss of one particle renders the measurement separable. However, other states can preserve a degree of entanglement under reductions. The existence of an iso-entangled basis of such states is an invitation to investigate noise-resiliant entanglement swapping protocols which are important building blocks for many quantum information applications.

\textit{Note added.---} During the late stage of our work, we became aware of the previous work \cite{Tanaka2007} where i.~a.~bases are found for some Dicke states. 

\begin{acknowledgments}
We thank Hayata Yamasaki, Marcus Huber, Jakub Czartowski and Karol \.{Z}yczkowski for discussions. A. T.~acknowledges support from the Wenner-Gren Foundation and from the Knut and Alice Wallenberg Foundation through the Wallenberg Center for Quantum Technology (WACQT). M. J. R. acknowledges financial support from the Austrian Science Fund (FWF) through BeyondC (F7103-N38), the Project No. I-2906, as well as support by the John Templeton Foundation through Grant 61466, The Quantum Information Structure of Spacetime (qiss.fr), the Foundational Questions Institute (FQXi) and the research platform TURIS. The opinions expressed in this publication are those of the authors and do not necessarily reflect the views of the John Templeton Foundation.
\end{acknowledgments}

\bibliography{reference_entbasis}

\appendix
\onecolumngrid

\section{Basis construction for every bipartite state of local dimension $d=4$ and $d=8$}\label{AppPower2}
Let the local dimension be a power of two, $d=2^m$, and index the $d^2$ basis elements as  $(\tilde{j},j)$ where $\tilde{j}=0,1,\ldots, d-1$ and $j=1,2, \ldots,d$. Let $W^A_\psi\otimes W^B_\psi$ be the state-dependent local unitaries that transform the general state $\ket{\psi}$ into the Schmidt basis, i.e.~$\ket{\psi_S}\equiv W^A_\psi\otimes W^B_\psi\ket{\psi}=\sum_{l=0}^{d-1} \lambda_l \ket{l,l}$, with the Schmidt coefficients $\lambda_l \in \mathbb{R}$ satisfying $\sum_l \lambda_l^2=1$. We now further decompose the individual $d$-dimensional registers as a string of $m$ qubits, writing $\ket{l}=\ket{l_1\ldots l_m}$. Thus, the Schmidt decomposed state reads
\begin{equation}\label{appeq1}
	\ket{\psi_S}=\sum_{l_1, \ldots, l_m=0,1} \lambda_l \ket{l_1\ldots  l_m,l_1\ldots l_m}.
\end{equation}
Once the state has been put in the form \eqref{appeq1}, we apply a set of local unitaries that is independent of the Schmidt coefficients. For $d=4$ and $\tilde{j}=0$, the two sets of unitaries read as follows:
\begin{equation}
	\begin{array}{cc||cc||c}
		\tilde{j}& j & U_1^{(\tilde{j},j)} & U_2^{(\tilde{j},j)} & U_1^{(\tilde{j},j)} \otimes U_2^{(\tilde{j},j)}\ket{\psi_S}\\ \hline\hline
		0 & 1 & \openone \otimes \openone & \openone\otimes \openone &\lambda_{00}\ket{00,00}+\lambda_{01}\ket{01,01}+\lambda_{10}\ket{10,10}+\lambda_{11}\ket{11,11} \\
		0 & 2 & \openone \otimes X & \openone \otimes XZ &\lambda_{00}\ket{01,01}-\lambda_{01}\ket{00,00}+\lambda_{10}\ket{11,11}-\lambda_{11}\ket{10,10} \\
		0 & 3 & X \otimes \openone & XZ\otimes Z &\lambda_{00}\ket{10,10}-\lambda_{01}\ket{11,11}-\lambda_{10}\ket{00,00}+\lambda_{11}\ket{01,01} \\
		0 & 4 & X \otimes X & XZ\otimes X &\lambda_{00}\ket{11,11}+\lambda_{01}\ket{10,10}-\lambda_{10}\ket{01,01}-\lambda_{11}\ket{00,00}
	\end{array}\label{ququart}
\end{equation}

In addition, we define $U_1^{(\tilde{j},j)}:= X_4^{\tilde{j}}\ U_1^{(\tilde{j}=0,j)}$ and $U_2^{(\tilde{j},j)}:=U_2^{(\tilde{j}=0,j)}$, where $X_d$ is the $d$-dimensional shift-operator $X_d=\sum_{l=0}^{d-1}\ketbra{l+1}{l}$. Note that, the unitaries $U_2^{(\tilde{j},j)}$ coincide with the state-independent set for two qubits given in Eq.~\eqref{2qubit} and do not depend on $\tilde{j}$. 
At the same time, $U_1^{(\tilde{j}=0,j)}$ are the same as $U_2^{(\tilde{j},j)}$ where the $Z$ gates are left out. We now show that $\{U_1^{(\tilde{j},j)} \otimes U_2^{(\tilde{j},j)}\ket{\psi_S}\}_{\tilde{j},j}$ is a basis of the bipartite Hilbert space. One can check directly that the four states with $\tilde{j}=0$ stated in Eq.~\eqref{ququart} above are pairwise orthogonal. We want to mention that we are exploiting the fact that $U_2^{(\tilde{j}=0,j)}$ are the elements of a state-independent construction. To see the connection, note that the calculation for the state-independent two-qubit construction reads as follows:
\begin{align}
	(\openone \otimes \openone) (\lambda_{00}\ket{00}+\lambda_{01}\ket{01}+\lambda_{10}\ket{10}+\lambda_{11}\ket{11})&=\lambda_{00}\ket{00}+\lambda_{01}\ket{01}+\lambda_{10}\ket{10}+\lambda_{11}\ket{11}\, ,\\
	(\openone \otimes XZ) (\lambda_{00}\ket{00}+\lambda_{01}\ket{01}+\lambda_{10}\ket{10}+\lambda_{11}\ket{11})&=\lambda_{00}\ket{01}-\lambda_{01}\ket{00}+\lambda_{10}\ket{11}-\lambda_{11}\ket{10}\, ,\\
	(XZ \otimes Z) (\lambda_{00}\ket{00}+\lambda_{01}\ket{01}+\lambda_{10}\ket{10}+\lambda_{11}\ket{11})&=\lambda_{00}\ket{10}-\lambda_{01}\ket{11}-\lambda_{10}\ket{00}+\lambda_{11}\ket{01}\, ,\\
	(XZ \otimes X) (\lambda_{00}\ket{00}+\lambda_{01}\ket{01}+\lambda_{10}\ket{10}+\lambda_{11}\ket{11})&=\lambda_{00}\ket{11}+\lambda_{01}\ket{10}-\lambda_{10}\ket{01}-\lambda_{11}\ket{00}\, .
\end{align}
Since these states are pairwise orthogonal for arbitrary real coefficients $\lambda_{l_1l_2}$, the same holds true for the states in Eq.~\eqref{ququart}.
In addition, all of the states where $\tilde{j}=0$ are elements of the subspace spanned by $\ket{00,00}$, $\ket{01,01}$, $\ket{10,10}$ and $\ket{11,11}$. Hence, they form a basis of this four-dimensional subspace. By shifting now the first system we obtain a basis for the remaining orthogonal subspaces. More precisely, since we defined $U_1^{(\tilde{j},j)}= X_4^{\tilde{j}}\ U_1^{(\tilde{j}=0,j)}$ the states where $\tilde{j}=1$ are esentially the same states as the ones in Eq.~\eqref{ququart} but with the first system shifted by one $l\rightarrow l\oplus 1\  (\text{mod } 4)$. For example, $\lambda_{00}\ket{11,10}-\lambda_{01}\ket{00,11}-\lambda_{10}\ket{01,00}+\lambda_{11}\ket{10,01}$ is the state that corresponds to $\tilde{j}=1$ and $j=3$. In this way, the four states where $\tilde{j}=1$ form a basis of the subspace spanned by $\ket{01,00}$, $\ket{10,01}$, $\ket{11,10}$ and $\ket{00,11}$ (or all states where $\ket{l+1,l}$). Analogously, the four states where $\tilde{j}=2$ ($\tilde{j}=3$) form a basis of the subspaces spanned by the vectors with $\ket{l+2,l}$ ($\ket{l+3,l}$).  Altogether, the sixteen states $\{U_1^{(\tilde{j},j)} \otimes U_2^{(\tilde{j},j)}\ket{\psi_S}\}_{\tilde{j},j}$ form a basis of the entire sixteen dimensional Hilbert space.\\

A similar construction can be found for $d=8$ by using the state-independent construction of three qubits. Similar as above, the set for $\tilde{j}=0$ reads as follows:
\begin{equation}
	\begin{array}{cc||cc||c}
		\tilde{j}& j & U_1^{(\tilde{j},j)} & U_2^{(\tilde{j},j)} & U_1^{(\tilde{j},j)} \otimes U_2^{(\tilde{j},j)}\ket{\psi_S}\\ \hline\hline
		0 & 1 & \openone \otimes \openone\otimes \openone & \openone \otimes \openone\otimes \openone &+\lambda_{000}\ket{000,000}+\lambda_{001}\ket{001,001}+\lambda_{010}\ket{010,010}+\lambda_{011}\ket{011,011}\\
		&&&& +\lambda_{100}\ket{100,100}+\lambda_{101}\ket{101,101}+\lambda_{110}\ket{110,110}+\lambda_{111}\ket{111,111} \\\hline
		0 & 2 & \openone \otimes \openone \otimes X & Z\otimes Z \otimes XZ &+\lambda_{000}\ket{001,001}-\lambda_{001}\ket{000,000}-\lambda_{010}\ket{011,011}+\lambda_{011}\ket{010,010}\\
		&&&& -\lambda_{100}\ket{101,101}+\lambda_{101}\ket{100,100}+\lambda_{110}\ket{111,111}-\lambda_{111}\ket{110,110} \\\hline
		0 & 3 & \openone \otimes X\otimes \openone & Z \otimes XZ\otimes \openone &+\lambda_{000}\ket{010,010}+\lambda_{001}\ket{011,011}-\lambda_{010}\ket{000,000}-\lambda_{011}\ket{001,001}\\
		&&&& -\lambda_{100}\ket{110,110}-\lambda_{101}\ket{111,111}+\lambda_{110}\ket{100,100}+\lambda_{111}\ket{101,101} \\\hline
		0 & 4 & X \otimes \openone\otimes \openone & XZ \otimes \openone\otimes \openone & (...) \\\hline
		0 & 5 & \openone \otimes X\otimes X & Z \otimes X\otimes XZ & (...) \\\hline
		0 & 6 & X \otimes \openone\otimes X & X \otimes \openone\otimes XZ & (...) \\\hline
		0 & 7 & X \otimes X\otimes \openone & X \otimes XZ\otimes Z  & (...) \\\hline
		0 & 8 & X \otimes X \otimes X & X \otimes XZ\otimes X & (...) 
	\end{array}
\end{equation}
Again, we define $U_1^{(\tilde{j},j)}= X_8^{\tilde{j}}\ U_1^{(\tilde{j}=0,j)}$ and $U_2^{(\tilde{j},j)}=U_2^{(\tilde{j}=0,j)}$. The proof that this forms a basis of the 64-dimension Hilbert space is completely analogous to the case of $d=4$ before. The eight states for $\tilde{j}=0$ form a basis of the eight-dimensional subspace spanned by $\ket{l_1l_2l_3,l_1l_2l_3}$ (for $l_i=0,1$). Applying the shift operator $X_8$ to the first system, one obtains bases of the other eight-dimensional orthogonal subspaces spanned by the vectors with $\ket{l+\tilde{j},l}$. This approach cannot (immediately) be generalized to higher dimensions $d=2^n$, due to the lack of state-independent constructions for $n\geq 4$ qubits. However, there is in principle no reason to restrict the unitaries on the second system to tensor products of single qubit Pauli gates as we do here. In principle, we could also consider general permutations with suitably chosen signs such that all terms cancel in this pairwise sense as above. Even when considering this larger class of possibilities, we made an exhaustive search and could not find any additional construction. Due to this, it seems unlikely that a construction exists in which the unitaries do not depend on the Schmidt coefficients.

\section{An $n$-qubit basis of $W$-states}\label{AppW}
We define the $n$-qubit $W$-state as
\begin{align}
	& \ket{W_1}\equiv  \ket{1} \nonumber \\
	&\ket{W_2}\equiv  \frac{1}{\sqrt{2}}\left(\ket{01}+\ket{10}\right) \nonumber \\
	&\ket{W_3}\equiv  \frac{1}{\sqrt{3}}\left(\ket{001}+\ket{010}+\ket{100}\right) \nonumber \\
	&\ket{W_4}\equiv \frac{1}{2}\left(\ket{0001}+\ket{0010}+\ket{0100}+\ket{1000}\right) \nonumber \\
	& \quad \vdots
\end{align}
Note that for one and two qubits, the definition is only introduced for sake of convenience. In general, we write
\begin{equation}
	\ket{W_n}\equiv\frac{1}{\sqrt{n}}\sum_{\sigma} \sigma(\ket{0}^{\otimes n-1} \ket{1}),
\end{equation}
where $\sigma$ runs over all permutations of the position of ``$1$''. It is also useful to write the state recursively as
\begin{equation}\label{Wrecursion}
	\ket{W_{n+1}}=\sqrt{\frac{n}{n+1}}\ket{W_n}\otimes \ket{0} +\frac{1}{\sqrt{n+1}}\ket{0}^{n}\otimes \ket{1}
\end{equation}

Clearly, if we apply the local unitaries $U_1^{(1)}=\openone$ and $U_1^{(2)}=X$ to $\ket{W_1}$ we generate the trivial one-qubit $W$-basis $\{\ket{0},\ket{1}\}$. Assume now that the local unitaries $\{U_k^{(j)}\}$ for $k=1,\ldots n$ and $j=1,\ldots,2^n$ yield a $\ket{W_n}$-basis. We will now show that under this assumption we can construct a basis for $\ket{W_{n+1}}$ and hence it follows from induction that a $W$-basis exists for any number of qubits.

We illustrate the induction step as follows,
\begin{equation}
	\begin{array}{@{}ccccccc|cc}
		U^{(1)}_{1} & \otimes &  U^{(1)}_{2} & \otimes & \hdots & \otimes & U^{(1)}_{n}  & \otimes & \id \\
		U^{(2)}_{1} & \otimes &  U^{(2)}_{2} & \otimes & \hdots & \otimes & U^{(2)}_{n}  & \otimes & \id \\
		&  \vdots & & & &  \vdots & & \vdots & \\
		U^{(2^n)}_{1} & \otimes &  U^{(2^n)}_{2} & \otimes & \hdots & \otimes & U^{(2^n)}_{n}  & \otimes & \id \\ \hline 
		U^{(1)}_{1} Z & \otimes &  U^{(1)}_{2} Z & \otimes & \hdots & \otimes & U^{(1)}_{n} Z  & \otimes & X \\
		U^{(2)}_{1} Z & \otimes &  U^{(2)}_{2} Z & \otimes & \hdots & \otimes & U^{(2)}_{n} Z  & \otimes &  X \\
		&  \vdots & & & &  \vdots &  & \vdots & \\
		U^{(2^{n})}_{1} Z & \otimes &  U^{(2^{n})}_{2} Z & \otimes & \hdots & \otimes & U^{(2^{n})}_{n} Z  & \otimes &  X \\
	\end{array}.
	\label{construction}
\end{equation}
We see that for the first $2^n$ basis elements, we extend the unitaries for $n$ qubits by tensoring with $\openone$ for qubit number $n+1$. For the latter $2^n$ basis elements, we extend the unitaries for $n$ qubits by multiplying all of them from the right by $Z$ and finally tensoring with $X$ for qubit number $n+1$. As usual, we now write the string of unitaries associated to each row as $V^{(n+1)}_j$ for $n=1,\ldots,2^{n+1}$. We similarly use $V_j^{(n)}$ for the unitary strings for the case of $n$ qubits.

To see that this yields a basis, we first show that the first $2^{n}$ basis elements (upper block of table, $j=1,\ldots,2^{n}$) are orthogonal. For this purpose, we use the recursion formula \eqref{Wrecursion} to write for $j\neq j'$
\begin{align}\nonumber
	\bracket{W_{n+1}}{(V^{(n+1)}_{j'})^\dagger V^{(n+1)}_j}{W_{n+1}}&=\frac{n}{n+1} \bracket{W_n 0}{(V^{(n)}_{j'})^\dagger V_j^{(n)} \otimes \openone  }{W_n0}+\frac{1}{n+1} \bracket{0\ldots 01}{(V^{(n)}_{j'})^\dagger V_j^{(n)} \otimes \openone}{0\ldots 01}\\\nonumber
	&+\frac{\sqrt{n}}{n+1}\bracket{W_n 0}{(V^{(n)}_{j'})^\dagger V_j^{(n)} \otimes \openone}{0\ldots 01}+\frac{\sqrt{n}}{n+1}\bracket{0\ldots 01}{(V^{(n)}_{j'})^\dagger V_j^{(n)} \otimes \openone}{W_n 0}=0
\end{align}
The first term is zero for all $j'\neq j$ due to the induction hypothesis. The third and fourth terms are zero due to orthogonality in the last qubit register. The second term is zero for every $j'\neq j$ there exists at least one qubit register $k$ for which $U^{(j')}_k$ and $U^{(j)}_k$ are composed of different numbers of bit-flips ($X$). The latter follows from the initial condition of using $\{\id, X\}$ to construct the $\ket{W_1}$-basis. 

The same procedure will analogously show that the latter $2^n$ basis elements (lower block of the table, $j=2^{n}+1,\ldots,2^{n+1}$) are orthogonal. We are left with showing that every overlap between the upper and lower block, i.e.~with any $j'=1,\ldots,2^n$ and any $j=2^{n}+1,\ldots,2^{n+1}$, also vanishes. For this we have
\begin{align}\nonumber
	\bracket{W_{n+1}}{(V^{(n+1)}_{j'})^\dagger V^{(n+1)}_j}{W_{n+1}}&=\frac{n}{n+1} \bracket{W_n 0}{ \left[(V^{(n)}_{j'})^\dagger V_j^{(n)} \otimes X\right] \bigotimes_{k=1}^n Z\otimes \openone }{W_n0}\\\nonumber
	&+\frac{1}{n+1} \bracket{0\ldots 01}{ \left[(V^{(n)}_{j'})^\dagger V_j^{(n)} \otimes X\right] \bigotimes_{k=1}^n Z\otimes \openone}{0\ldots 01}\\\nonumber
	&+\frac{\sqrt{n}}{n+1}\bracket{W_n 0}{ \left[(V^{(n)}_{j'})^\dagger V_j^{(n)} \otimes X\right] \bigotimes_{k=1}^n Z\otimes \openone}{0\ldots 01}\\ \nonumber
	& +\frac{\sqrt{n}}{n+1}\bracket{0\ldots 01}{ \left[(V^{(n)}_{j'})^\dagger V_j^{(n)} \otimes X\right] \bigotimes_{k=1}^n Z\otimes \openone}{W_n 0}
\end{align}
Note that $\bigotimes_{k=1}^n Z\otimes \openone \ket{W_n0}=-\ket{W_n0}$ and $\bigotimes_{k=1}^n Z\otimes \openone \ket{0\ldots 01}=\ket{0\ldots 01}$. The first and second terms are both zero due to orthogonality in the final qubit register. We thus have
\begin{align}\nonumber
	\bracket{W_{n+1}}{(V^{(n+1)}_{j'})^\dagger V^{(n+1)}_j}{W_{n+1}}&=\frac{\sqrt{n}}{n+1}\bracket{W_n}{ (V^{(n)}_{j'})^\dagger V_j^{(n)}}{0\ldots 0}-\frac{\sqrt{n}}{n+1}\bracket{0\ldots 0}{(V^{(n)}_{j'})^\dagger V_j^{(n)}}{W_n}\\
	&= \frac{\sqrt{n}}{n+1}\bracket{W_n}{ (V^{(n)}_{j'})^\dagger V_j^{(n)}-(V^{(n)}_{j})^\dagger V_{j'}^{(n)}}{0\ldots 0}=0.
\end{align}
The last equality follows from the fact that it is sufficient, for given $(j,j')$, that there exist some register index $k$ such that $(U^{(j')})_k^\dagger U^{(j)}_k-(U^{(j)})_k^\dagger U^{(j')}_k=0$ in order for the overlap to vanish. This is always the case because due to our construction (see initial condition and the table), for every two unitaries there is at least one register $k$ where the single-qubit unitaries differ by $X$, meaning that either $(U^{(j)}_k,U^{(j')}_k)=(\openone,X) /(Z,XZ) $, or the same with $j \leftrightarrow j'$ is true. The condition above is satisfied by all of these combinations. Hence we conclude that the proposed construction satisfies
\begin{equation}
	\bracket{W_{n+1}}{(V_j^{(n+1)})^\dagger V_{j'}^{(n+1)}}{W_{n+1}}=\delta_{jj'}
\end{equation}
and  therefore yields a $W$-state basis for any number of qubits.

\section{The Pauli structure for state-independent qubit unitary constructions}\label{AppPauliStructure}

We consider the set of local unitaries $\mathcal{P}$ that are applied to the $i$-th qubit in the state-independent construction and show that without loss of generality, the set can be chosen to be the Pauli-type gates $\mathcal{P}\equiv \{\openone, X,Z,XZ\}$. First, note that the set is finite since there are exactly $2^n$ basis states. Next, we observe that the identity $\openone$ has to be within the set $\mathcal{P}$ since we demand that $V_1=\openone$. Furthermore, we can argue that the gate
\begin{equation}\nonumber
	XZ=\left(\begin{array}{rr}
		0 & -1 \\
		1 & 0 \\
	\end{array}\right)
\end{equation}
has to be within the set as well, since it is the only gate that maps every real qubit state to its orthogonal state. More precisely, if it is not used on the $i$-th qubit at least once, one can choose a real qubit state $\ket{\phi_{i}}$ such that none of the gates in $\mathcal{P}$ map $\ket{\phi_{i}}$ to its orthogonal vector. Hence if we apply the state-independent construction to the real-valued product state $\ket{\phi}=\ket{0}_1\otimes \ldots \ket{0}_{i-1}\otimes \ket{\phi_{i}}\otimes\ket{0}_{i+1}\otimes \ldots \otimes \ket{0}_{n}$ none of the resulting $2^n$ states are distinguishable on the $i$-th qubit, which is impossible if these states should form a basis of product states. Therefore, the gate $XZ$ has to be within the set $\mathcal{P}$. Apart from the gates $\openone$ and $XZ$ we can constrain which other qubit unitaries can be in the set $\mathcal{P}$. We know that if we demand $V_1=\openone$, every string of local unitaries $(V_j)$ and their products $(V_j)^\dagger V_{j'}$ with $j\neq j'$ have to be skew-symmetric. As a result, the local unitaries on each subsystem (hence, the unitaries in the set $\mathcal{P}$) and also all their products have to be either symmetric or skew-symmetric. By neglecting a global phase, the general form of a unitary operator can be written as:
\begin{equation}
	U=\left( \begin{array}{rr}
		\cos{\left(\theta\right)}e^{i\alpha} & \sin{\left(\theta\right)}e^{i\beta}  \\ 
		-\sin{\left(\theta\right)}e^{-i\beta} & \cos{\left(\theta\right)}e^{-i\alpha}  \\
	\end{array}\right).
\end{equation}
The only skew-symmetric $2\times2$ unitary is, up to an irrelevant global phase, the Pauli-type operator $XZ$, which we already found to be necessarily in the set $\mathcal{P}$. All the symmetric matrices of this form can be written as:
\begin{equation}
	U=\left( \begin{array}{rr}
		\cos{\left(\theta\right)}e^{i\alpha} & i\sin{\left(\theta\right)} \\ 
		i\sin{\left(\theta\right)} & \cos{\left(\theta\right)}e^{-i\alpha}  \\
	\end{array}\right).
\end{equation}
If the gate $U$ is in $\mathcal{P}$, it is at some point multiplied with the gate $XZ$ since the operator $XZ$ is used at least once on the $i$-th qubit. Since we know that the result of this product has to be again either symmetric or skew-symmetric, we obtain that $\alpha=\pi/2,3\pi/2$ due to:
\begin{align}
	(XZ)^\dagger U= \left(\begin{array}{rr}
		0 & 1 \\
		-1 & 0 \\
	\end{array}\right) \left( \begin{array}{rr}
		\cos{\left(\theta\right)}e^{i\alpha} & i\sin{\left(\theta\right)} \\ 
		i\sin{\left(\theta\right)} & \cos{\left(\theta\right)}e^{-i\alpha}  \\
	\end{array}\right)= \left( \begin{array}{rr}
		i\sin{\left(\theta\right)}   & \cos{\left(\theta\right)}e^{-i\alpha} \\ 
		-\cos{\left(\theta\right)}e^{i\alpha} & -i\sin{\left(\theta\right)}  \\
	\end{array}\right)\, .
\end{align}
The two possibilities for $\alpha=\pi/2,3\pi/2$ correspond to the two solutions
\begin{align}
	U_1=\left( \begin{array}{rr}
		\cos{\left(\theta\right)} & \sin{\left(\theta\right)} \\ 
		\sin{\left(\theta\right)} & -\cos{\left(\theta\right)}  \\
	\end{array}\right) \, , \quad U_2=\left( \begin{array}{rr}
		\sin{\left(\theta\right)}   & -\cos{\left(\theta\right)}\\ 
		-\cos{\left(\theta\right)}& -\sin{\left(\theta\right)}  \\
	\end{array}\right).
\end{align}
We left the irrelevant global factor $i$ for simplicity. Considering the additional degree of freedom of $\theta$, we can restrict to the first class of solutions $U_1$ since the second class $U_2$ can be obtained by shifting $\theta$ by $\pi /2$. Hence, if we add a gate $U$ to the set $\mathcal{P}$, it has to be of the form given by $U_1$ above. Now if we add two such gates to the set $\mathcal{P}$, the product of $U_1$ with another valid matrix $U'_1$ is
\begin{align*}
	U^\dagger_1 U'_1&=\left( \begin{array}{rr}
		\cos{\left(\theta\right)}& \sin{\left(\theta\right)} \\ 
		\sin{\left(\theta\right)} & -\cos{\left(\theta\right)}  \\
	\end{array}\right)
	\left( \begin{array}{rr}
		\cos{\left(\theta'\right)}& \sin{\left(\theta'\right)} \\ 
		\sin{\left(\theta'\right)} & -\cos{\left(\theta'\right)}  \\
	\end{array}\right)= \\
	&=\left( \begin{array}{rr}
		\cos{\left(\theta\right)} \cos{\left(\theta'\right)} + \sin{\left(\theta\right)}\sin{\left(\theta'\right)} \quad &
		\cos{\left(\theta\right)} \sin{\left(\theta'\right)}  - \sin{\left(\theta\right)} \cos{\left(\theta'\right)}\\ 
		\sin{\left(\theta\right)} \cos{\left(\theta'\right)} - \cos{\left(\theta\right)} \sin{\left(\theta'\right)}   \quad &
		\cos{\left(\theta\right)}\cos{\left(\theta'\right)}+ \sin{\left(\theta\right)}\sin{\left(\theta'\right)}  \\
	\end{array}\right)&=\left( \begin{array}{rr}
		\cos{\left(\theta-\theta'\right)} \quad &
		-\sin{\left(\theta-\theta'\right)}\\ 
		\sin{\left(\theta-\theta'\right)}   \quad &
		\cos{\left(\theta-\theta'\right)}  \\
	\end{array}\right)
\end{align*}
If both, $U_1$ and $U'_1$, are in $\mathcal{P}$, this product has to be again either symmetric, which is true if $\theta =\theta'$ or skew-symmetric, which is true if $\theta =\theta'+\pi/2$. (Note that, also $\theta =\theta'+\pi$ and $\theta =\theta'+3\pi/2$ are possible solutions but we do not have to consider them since they just differ by an irrelevant global factor of $(-1)$ in one of the two unitaries.) Hence, $U'_1$ is either $U_1$ or the unitary $U_2$ stated above. Hence, for each single-qubit subsystem, we can only use a set of operators $\mathcal{P}\equiv\{\openone, U_1, U_2, XZ\}$ for our basis construction.

In a final step, we can show that we can restrict also $\theta$. To see this, suppose a state-independent construction exists where we use the gates from the set $\mathcal{P}\equiv\{\openone, U_1, U_2, XZ\}$. Now consider the construction where each gate $U_1$ is replaced with $W^\dagger U_1 W$, each gate $U_2$ with $W^\dagger U_2 W$, each gate $XZ$ with $W^\dagger XZ W$ and each gate $\openone$ with $W^\dagger \openone W$, where: 
\begin{align}
	W=\left( \begin{array}{rr}
		\cos{\left(\alpha\right)}& -\sin{\left(\alpha\right)} \\ 
		\sin{\left(\alpha\right)} & \cos{\left(\alpha\right)}  \\
	\end{array}\right) \,
\end{align}
for some freely chosen parameter $\alpha$. This also has to be a state-independent construction for any state with real coefficients, since $W$ is a map from real states to real states, and all inner products between the basis states remain the same under this local transformation. Hence, if a state-independent construction exists with the gate set $\mathcal{P}\equiv\{\openone, U_1, U_2, XZ\}$, another state-independent construction with the gate set $\mathcal{P}'\equiv\{W^\dagger \openone W, W^\dagger U_1 W, W^\dagger U_2 W, W^\dagger XZ W\}$ has to exist as well. Choosing $\alpha=\theta /2$, the set $\mathcal{P}'\equiv\{W^\dagger \openone W, W^\dagger U_1 W,W^\dagger U_2 W,W^\dagger XZ W\}$ becomes exactly $\mathcal{P}'\equiv \{\openone, X,Z,XZ\}$, which concludes the proof.


\end{document}